\newtheorem{theorem}{Theorem}
\newtheorem{condition}{Condition}
\newtheorem{lemma}{Lemma}
\newtheorem{definition}{Definition}
\title{Learning in Strategic Queuing Systems with Small Buffers}
\author{%
  Ariana Abel \\
  UC Berkeley \\
  \texttt{aabel14@berkeley.edu}
  \and
  Yoav Kolumbus \\
  Cornell University \\
  \texttt{yoav.kolumbus@cornell.edu}
  \and
  Jer\'onimo Mart\'in Duque \\
  Cornell University \\
  \texttt{jm2478@cornell.edu}
  \and
  Cristian Palma Foster \\
  Cornell University \\
  \texttt{cjp276@cornell.edu}
  \and
  \'Eva Tardos \\
  Cornell University \\
  \texttt{eva.tardos@cornell.edu}
}
\date{}
\begin{document}
\sloppy
\maketitle

\begin{abstract}
    We consider learning outcomes in games with carryover effects between rounds: when outcomes in the present round affect the game in the future. An important example of such systems is routers in networking, as they use simple learning algorithms to find the best way to deliver packets to their desired destination. This simple, myopic, and distributed decision process makes large queuing systems easy to operate, but at the same time, the system needs more capacity than would be required if all traffic were centrally coordinated. Gaitonde and Tardos (EC 2020 and JACM 2023) initiated the study of such systems, modeling them as an infinitely repeated game in which routers compete for servers and the system maintains a state (the number of packets held at each queue) that results from outcomes of previous rounds. However, their model assumes that servers have no buffers at all, so routers have to resend all packets that were not served successfully, which makes their system model unrealistic. They show that in their model, even with hugely increased server capacity relative to what is needed in the centrally coordinated case, ensuring that the system is stable requires the use of timestamps and priority for older packets.
\vspace{2pt}

We consider a system with two important changes, which make the model more realistic and allow for much higher traffic rates: first, we add a very small buffer to each server, allowing the server to hold on to a single packet to be served later (if it fails to serve it immediately), and second, we do not require timestamps or priority to older packets. Using theoretical analysis and simulations, we show that when queues are learning, a small constant-factor increase in server capacity, compared to what would be needed if centrally coordinating, suffices to keep the system stable, even if servers select randomly among packets arriving simultaneously.

\end{abstract}

%%%%%%%%%%%%%%%%%%%%%%%%%%%
\section{Introduction}\label{sec:intro}
In this paper, we model routers in networking as agents using simple learning algorithms to find the best way to get their packets served. This simple, myopic, and distributed multi-agent decision system is an important application of multi-agent learning. Managing a networking system in such a distributed way makes large queuing systems simple to operate, but at the same time, the system needs more capacity than would be required if all traffic were centrally coordinated. In a recent paper, \cite{DBLP:conf/sigecom/GaitondeT20,DBLP:journals/jacm/GaitondeT23}  initiated the study of such systems, modeling them as an infinitely repeated game in which routers compete for servers, and the system maintains a state (number of packets held by each queue) that results from outcomes of previous rounds. Routers get to send a packet at each step in which they received a packet or have one in their queue,\footnote{Players are routers with states (their queue length); the terms {\em router} and {\em queue} are used interchangeably.} to one of the servers. Each server attempts to process only one of the packets that arrive to it. However, the model of \cite{DBLP:conf/sigecom/GaitondeT20,DBLP:journals/jacm/GaitondeT23} (see also \cite{DBLP:conf/nips/SentenacBP21-short,DBLP:conf/nips/0001LW23-short}) assumes that servers have no buffers at all, so queues have to resend all packets that were not served successfully.  They show that, in their system, even with hugely increased server capacity relative to what is needed in the centrally-coordinated case, 
the use of credible timestamps and priority for older packets by all servers is required to ensure that the system is stable, 
i.e., that queue lengths do not diverge with time (see Section~\ref{sec:model} for a formal definition of stability). 

In networking systems, servers accepting packets typically have a very small buffer and can hold on to a few packets to be served later. In this paper, we consider the analog of the model of  \cite{DBLP:conf/sigecom/GaitondeT20,DBLP:journals/jacm/GaitondeT23} in such systems with a tiny buffer at each server. We show that even with a buffer capable of holding only a single packet and non-coordinated learning agents, the service capacity of the system greatly increases.
However, the tasks of the learning algorithms become more complex when servers have buffers. To see this, consider the case of a single queue with many possible servers. Without buffers, the queue faces a classical multi-arm bandit problem, aiming to learn which of the servers is the best one to send their packets to. By contrast, we observe that with a buffer of even just one packet at each server, a sufficiently large number of low-capacity servers will always ensure stability, as the learning agent can now take advantage of these low-capacity servers as well, sending packets to them rarely enough that they are likely to clear in time before the next packet is sent to them. 
This makes the goal of a learning algorithm more complex even with just one queue and many servers: it should learn to distribute its packets in a way that takes advantage of the capacity of all the servers. Interestingly, our results show that even simple bandit learning algorithms manage to learn to use low-capacity servers effectively in the presence of buffers.

Our main result, formally stated in Theorem~\ref{thm:main}, is to show that when multiple agents compete for service while using learning algorithms to identify good servers, a small constant-factor increase in server capacity---relative to what would be required under centralized coordination---suffices to maintain the system stable, even if servers randomly select among simultaneously arriving packets.
Specifically, suppose that we have $n$ queues with packet arrival rates of $\lambda_1,\ldots,\lambda_n$, and $m$ servers with service rates $\mu_1,\ldots,\mu_m$. Obviously, we must have $\sum_i\lambda_i < \sum_j \mu_j$, else the system cannot be stable even if fully coordinated. We show that this condition is not strong enough to ensure stability even with full coordination due to the small size of the buffers.\footnote{We note that the condition allows to create a stable fully coordinated schedule, assuming we use large buffers at the servers. To see how to do this, note that the condition allows to create a fractional assignment such that $\sum_j x_{ij} >\lambda_i$ and $\sum_i x_{ij}<\mu_j$. Using the matching decomposition of this assignment in a coordinated schedule will guarantee a bounded expected number of packets both at the queues  and at the servers.} 
However, our result shows that if $\lambda_i<\frac12$ for all $i$, all queues use any type of learning that achieves the no-regret condition with high probability, and $3\sum_i\lambda_i < \sum_j \mu_j$, then this guarantees that the system remains stable. 
While we do not know whether the factor of 3 in the required capacity constraint is tight and this remains as an open question, we provide a lower bound, showing that to guarantee that a no-regret outcome of learning by the queues remains stable, it is required to have at least $2\sum_i\lambda_i < \sum_j \mu_j$.

%%%%%%%%%%%%%%%%%%%%%%%%%%%
\paragraph{Related Work.}\label{sec:related-work}
The classical focus of work on scheduling in queuing systems is aimed at finding schedules that achieve optimal throughput (see, e.g., the textbook of \cite{queuing_theory}). For work evaluating efficiency loss due to selfishness in different classical queuing systems, see the book of \cite{hassin2020rational} and survey of \cite{hassin2003queue}. Closer to our motivation, there is a growing literature that aims to understand how systems perform when the queues use simple learning algorithm to find good service.  \cite{DBLP:conf/nips/KrishnasamySJS16} considers a queue using a no-regret learning algorithm to find what may be the best servers, but does not consider multiple queues competing for service. Their primary goal is to study the expected size of the queue of packets as a function of time. They also extend the result to the case of multiple queues scheduled by a single coordinated scheduling algorithm, assuming there is a perfect matching between queues and optimal servers that can serve them. \cite{DBLP:conf/nips/SentenacBP21-short} and \cite{DBLP:conf/colt/FreundLW22} extended this work to decentralized learning dynamics in bipartite queuing systems that attain near-optimal performance without the matching assumption. For a survey on the role of learning and information in queuing systems see \cite{doi:10.1287/educ.2021.0235}. 

In our work, we diverge from the literature on queuing and scheduling and instead focus on a game-theoretic model with learning agents: 
queues using learning algorithms to best distribute their packets to get good service, while also repeatedly interacting with other learners and competing for servers. We assume that each queue separately learns to selfishly make sure its own packets are served at the highest possible rate, offering a strategic model of scheduling packets in a queuing system. Closest to our model from this literature is the work \cite{DBLP:conf/sigecom/GaitondeT20,DBLP:journals/jacm/GaitondeT23}, who consider the same bipartite model of queues and servers as we do, but without buffers. They show the exact condition to make such a system stable with central coordination of packets and prove that no-regret learning by the queues guarantees stability of the system if it has double the capacity needed for central coordination, assuming packets carry a timestamp and servers choose the oldest packet to serve.  
 \cite{DBLP:conf/wine/FuHL22} extended this work to a general network.
 \cite{DBLP:journals/corr/abs-2302-03614}  propose an alternative, episodic queuing system where agents have incentives to hold jobs in an episode before sending to a central server, but suffer penalties should their jobs not be completed before the end of the episode. They show that both equilibrium and no-regret outcomes ensure stability, so long as these costs are sufficiently large. 

In the works discussed so far, the servers receiving the packets have no buffers: all unserved packets are held in a queue and get resent to be served later. 
In queuing systems aimed at modeling networking, the receivers (servers) each do have a very small buffer, and can hold on to a few packets to be served later. It turns out that even having a buffer for a single packet significantly changes the effective service capacity of the network with learning agents, as already explained above. 

In the case without such buffers, it was feasible to exactly characterize the capacity needed to be able to make the system stable with central coordination by an elegant use of linear programming duality (see \cite{DBLP:conf/wine/FuHL22}). 
In the presence of limited buffer capacity at the servers, the rate at which a server accepts a packet depends on the state of the buffers. This is in contrast to a system without buffers, where the probability that a server accepts a packet is equal to the service rate of the server, and is stable over time. There is some recent work considering queuing systems with such evolving service probabilities, see for example \cite{DBLP:journals/sigmetrics/GrosofHHS24,DBLP:journals/corr/abs-2405-04102}. The goal of that work is to characterize the exact condition that allows the system to remain stable with central coordination of the schedule. Our work is the first to consider stability in such a system assuming each queue independently aims to optimize its own service rate running a no-regret learning algorithm. 

There is a large literature on bounding the price of anarchy for various games, beginning with \cite{DBLP:conf/stacs/KoutsoupiasP99}. These results extend to analyzing outcomes in repeated games assuming players are no-regret learners \cite{DBLP:journals/siamcomp/BalcanBM13,DBLP:journals/jacm/Roughgarden15,DBLP:conf/stoc/SyrgkanisT13}. However, these works make the strong assumption that games in different rounds are independent. By contrast, different rounds in queuing games are not independent: the number of packets in the system depends on the success in previous periods. Work on games with carryover effects, or context, are considered in multi-agent reinforcement learning as well as in Markov games, see for example \cite{littman1994markov,busoniu2008comprehensive,zhang2021multi}. However, this line of work does not focus on the overall quality of learning outcomes. Another line of work on repeated games with such carryover effects between rounds, focusing on such overall learning outcomes, is the repeated ad-auction game with limited budgets, where the remaining budgets make the games no longer independent. See for example \cite{DBLP:journals/mansci/BalseiroG19,DBLP:conf/innovations/GaitondeLLLS23,fikioris2023liquid,fikioris2024learning}.

%%%%%%%%%%%%%%%%%%%%%%%%%%%
\section{Model}\label{sec:model}
We consider a more realistic version of the model used by \cite{DBLP:conf/sigecom/GaitondeT20,DBLP:journals/jacm/GaitondeT23}. There is a system of $n$ queues and $m$ servers. We divide time into discrete time steps $t=0,1, \dots$, where in every step $t$, the following occurs. See also the illustration in Figure \ref{fig:model}.
\begin{enumerate}[leftmargin=*, itemindent=0em]
    \item Packets arrive at each queue $i$ according to a fixed probability $\lambda_i$. Formally, we model this by defining $B^i_t \sim \text{Bern}(\lambda_i)$ as an independent random variable that indicates whether a packet arrives at queue $i$ at time $t$. After arrival, each queue that contains at least one packet selects a server to which it attempts to send a single packet.

    \item Each server $j$ processes incoming packets as follows:
    (i) If the server's buffer is full, all received packets are rejected and returned to their respective queues. (ii) Otherwise, the server accepts one of the received packets, chosen uniformly at random, and places it in its buffer, rejecting the remaining packets. 
    (iii) The server then attempts to process the packet in its buffer. With probability $\mu_j$, the processing succeeds, and the packet is removed from the buffer. If the processing fails, the packet remains in the buffer.
        
    \item Any packets not accepted into a server's buffer are returned to their respective queues. Each queue receives bandit feedback based on the outcome of sending its packet: a reward of $0$ if the packet was not accepted by the server or $1$ if the packet was successfully placed in the server's buffer.\footnote{Feedback 
    % for packets placed in buffers 
    is immediate after a packet is placed in the buffer, not necessarily when it is successfully processed.}
    
\end{enumerate}
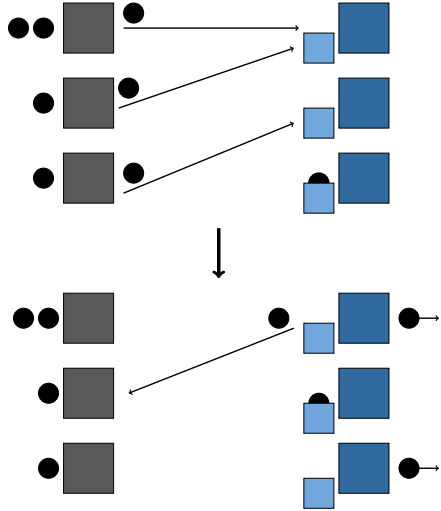
\begin{figure}[t!]
\vspace{-18pt}
\centering
\begin{minipage}{0.3\textwidth}
\resizebox{\textwidth}{!}{%
\begin{tikzpicture}
\tikzstyle{every node}=[font=\LARGE]
\draw [ fill={rgb,255:red,49; green,106; blue,158} ] (12.5,9.75) rectangle  node {\LARGE 
} (15,7.25);
\draw [ fill={rgb,255:red,49; green,106; blue,158} ] (12.5,6) rectangle (15,3.5);
\draw [ fill={rgb,255:red,49; green,106; blue,158} ] (12.5,2.25) rectangle (15,-0.25);
\draw [ fill={rgb,255:red,89; green,89; blue,89} ] (-1.25,9.75) rectangle (1.25,7.25);
\draw [ fill={rgb,255:red,89; green,89; blue,89} ] (-1.25,6) rectangle (1.25,3.5);
\draw [ fill={rgb,255:red,89; green,89; blue,89} ] (-1.25,2.25) rectangle (1.25,-0.25);
\draw [line width=2pt, ->] (1.75,8.5) .. controls (6,8.5) and (6.25,8.5) .. (10.5,8.5) ;
\draw [line width=2pt, ->] (1.5,4.5) -- (10.25,7.5);
\draw [line width=2pt, ->] (1.75,0.25) -- (10.25,3.75);
\draw [ fill={rgb,255:red,0; green,0; blue,0} ] (2.25,9.25) circle (0.5cm);
\draw [ fill={rgb,255:red,0; green,0; blue,0} ] (2,5.5) circle (0.5cm);
\draw [ fill={rgb,255:red,0; green,0; blue,0} ] (2.25,1.25) circle (0.5cm);
\draw [ fill={rgb,255:red,0; green,0; blue,0} ] (-2.25,8.5) circle (0.5cm);
\draw [ fill={rgb,255:red,0; green,0; blue,0} ] (-2.25,4.75) circle (0.5cm);
\draw [ fill={rgb,255:red,0; green,0; blue,0} ] (-2.25,1) circle (0.5cm);
\draw [ fill={rgb,255:red,0; green,0; blue,0} ] (-3.5,8.5) circle (0.5cm);
\draw [line width=5pt, ->] (6.5,-1.5) -- (6.5,-4);
\node [font=\LARGE] at (0.5,8) {};
\node [font=\LARGE] at (0.5,8) {};
\draw [ fill={rgb,255:red,89; green,89; blue,89} ] (-1.25,-4.75) rectangle (1.25,-7.25);
\draw [ fill={rgb,255:red,89; green,89; blue,89} ] (-1.25,-8.5) rectangle (1.25,-11);
\draw [ fill={rgb,255:red,89; green,89; blue,89} ] (-1.25,-12.25) rectangle (1.25,-14.75);
\draw [ fill={rgb,255:red,49; green,106; blue,158} ] (12.5,-4.75) rectangle (15,-7.25);
\draw [ fill={rgb,255:red,49; green,106; blue,158} ] (12.5,-8.5) rectangle (15,-11);
\draw [ fill={rgb,255:red,49; green,106; blue,158} ] (12.5,-12.25) rectangle (15,-14.75);
\draw [ fill={rgb,255:red,112; green,168; blue,219} ] (10.75,8.25) rectangle (12.25,6.75);
\draw [ fill={rgb,255:red,112; green,168; blue,219} ] (12.25,3) rectangle (10.75,4.5);
\draw [line width=2pt, ->] (10.25,-6.5) -- (2,-9.75);
\draw [ fill={rgb,255:red,0; green,0; blue,0} , line width=2pt ] (11.5,0.75) circle (0.5cm);
\draw [ fill={rgb,255:red,113; green,167; blue,220} ] (10.75,0.75) rectangle (12.25,-0.75);
\draw [ fill={rgb,255:red,113; green,167; blue,220} , line width=0.5pt ] (10.75,-6.25) rectangle (12.25,-7.75);
\draw [ fill={rgb,255:red,113; green,167; blue,220} , line width=0.5pt ] (10.75,-14) rectangle (12.25,-15.5);
\draw [ fill={rgb,255:red,0; green,0; blue,0} , line width=0.5pt ] (11.5,-10.25) circle (0.5cm);
\draw [ fill={rgb,255:red,0; green,0; blue,0} , line width=0.5pt ] (16,-13.5) circle (0.5cm);
\draw [ fill={rgb,255:red,0; green,0; blue,0} , line width=0.5pt ] (16,-6) circle (0.5cm);
\draw [ fill={rgb,255:red,0; green,0; blue,0} , line width=0.5pt ] (9.5,-6) circle (0.5cm);
\draw [ fill={rgb,255:red,0; green,0; blue,0} , line width=0.5pt ] (-2,-6) circle (0.5cm);
\draw [ fill={rgb,255:red,0; green,0; blue,0} , line width=0.5pt ] (-2,-9.75) circle (0.5cm);
\draw [ fill={rgb,255:red,0; green,0; blue,0} , line width=0.5pt ] (-2,-13.5) circle (0.5cm);
\draw [ fill={rgb,255:red,0; green,0; blue,0} , line width=0.5pt ] (-3.25,-6) circle (0.5cm);
\draw [line width=2pt, ->] (16.25,-6) -- (17.5,-6);
\draw [line width=2pt, ->] (16,-13.5) -- (17.5,-13.5);
\draw [ fill={rgb,255:red,113; green,166; blue,221} ] (10.75,-10.25) rectangle (12.25,-11.75);
\end{tikzpicture}
}
\end{minipage}%
\hspace{0.5cm}%
\begin{minipage}{0.6\textwidth}
\captionof{figure}{A system with three queues and three servers in a single time step. Discs represent packets; the left squares are queues, the right squares are servers, and the small squares are buffers. In this time step, two packets are sent to the top server. One of these packets is accepted into the buffer and then processed, while the other is returned to its queue. For the other two servers, one admits a packet into its buffer but fails to serve it, and the other processes a packet that was already in its buffer from a previous round.}
\label{fig:model}
\end{minipage}
\vspace{-1pt}
\end{figure}

\begin{definition}
    We say that a system with a schedule for sending packets is \emph{stable} if the expected number of packets waiting to be sent or served in the system is bounded by a constant at all times. We allow the constant to depend on the number of queues or servers and the parameters of the system, but it may not grow with time $T$. 
\end{definition}

It is clear that if $\sum_i \lambda_i >\sum_j \mu_j$ then even a fully coordinated system cannot be stable, and the total number of packets waiting at the queues must grow linearly in time. In the case of a single queue with a single server, $\lambda <\mu$ is the necessary and sufficient condition for keeping the packets remaining in the system bounded by a constant (which depends on the gap in the inequality) at all times; see, e.g.,~\cite{queuing_theory}. 
The analogous condition  in the case of multiple servers is clearly necessary but not sufficient. 
\begin{lemma}
    Consider an example with one queue and two servers, each with service probability $\mu=\frac12$. We claim that $\lambda < \frac{23}{24}<1$ is needed for the system to be stable even with full coordination.
\end{lemma}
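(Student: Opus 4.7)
The plan is to bound the long-run rate at which packets can be accepted into server buffers from above by $23/24$; since stability demands that the arrival rate be strictly smaller than the system's achievable acceptance rate (otherwise the queue has nonnegative drift and its expected size is unbounded), this forces $\lambda < 23/24$. The bound will come from analyzing any window of three consecutive time steps.

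The key technical step is to show that under any schedule (even one with full state information), if $f(s,k)$ denotes the maximum expected number of packets accepted into buffers in $k$ consecutive steps starting with buffer configuration $s$, then $\max_s f(s,3) = 23/8$. By the symmetry between the two servers it suffices to track $s \in \{(E,E),\,\text{1 full},\,(F,F)\}$, so I would carry out a short backward induction. The base case is $f(s,1) = 1$ when $s \neq (F,F)$ and $0$ otherwise; the Bellman recursion for $k=2$ then yields $f((E,E),2) = 2$, $f(\text{1 full},2) = 7/4$, and $f((F,F),2) = 3/4$, where the $7/4$ reflects the fact that after sending a packet to the empty server in a $1$-full configuration the state becomes $(F,F)$ with probability exactly $1/4$ (both servers fail to process). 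For $k=3$, the maximum is attained at $(E,E)$ via ``always send'': $f((E,E),3) = 1 + \tfrac12 f((E,E),2) + \tfrac12 f(\text{1 full},2) = 1 + 1 + 7/8 = 23/8$. An analogous one-step computation gives $f(\text{1 full},3) = 41/16$ and $f((F,F),3) = 25/16$, both strictly below $23/8 = 46/16$, so $(E,E)$ is indeed the extremal starting state.

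Partitioning time into disjoint three-step windows and applying this uniform per-window bound, the expected number of packets accepted into a buffer in $T$ steps is at most $(T/3)(23/8) = 23T/24$, so the long-run acceptance rate is at most $23/24$ under any schedule. Since in a stable system every arriving packet must eventually be accepted, $\lambda \le 23/24$ would make the ``accepted packets per step'' process at most marginally able to keep up with arrivals, and strict inequality is needed for bounded expected queue length; hence $\lambda < 23/24$ is necessary, as claimed.

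The main obstacle is really just identifying $(E,E)$ as the extremal starting state and executing the short Bellman bookkeeping without algebra errors. Conceptually, the content is that, starting from $(E,E)$, the best a scheduler can do is accept in steps $1$ and $2$, but these two accepts leave the system in $(F,F)$ at the start of step $3$ with probability exactly $1/8$, which is the ``loss'' that drops the three-step acceptance capacity from the naive $3$ down to $23/8$ and thereby yields the per-step bound $23/24 < 1$.
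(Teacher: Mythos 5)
Your proof is correct and rests on the same key observation as the paper's: over a three-step window, starting from both buffers empty and always sending, there is a $1/8$ chance that both buffers are full at the start of step three, which caps the expected acceptances per window at $\tfrac18\cdot 2 + \tfrac78 \cdot 3 = 23/8$ and thus the per-step rate at $23/24$. The difference is in how this is established. The paper exhibits only the single ``bad'' trajectory (step 1: server fails to serve, prob.\ $1/2$; step 2: send to the other server, both fail to serve, prob.\ $1/4$; step 3: both full) and implicitly takes $(E,E)$ with the always-send policy to be the extremal case. Your backward-induction computation of $f(s,k)$ over all buffer states makes both of those assumptions explicit: it verifies that sending to an empty server is optimal, that $(E,E)$ maximizes $f(\cdot,3)$, and — importantly for the disjoint-window partition — that $f(s,3)\le 23/8$ holds uniformly over $s$, so the bound applies to every window regardless of the (random) state at its start. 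This is the same approach but executed with more care, and it closes a small gap the paper's sketch leaves implicit. One minor point worth flagging in your write-up: like the paper, you show the acceptance rate is at most $23/24$ and conclude $\lambda < 23/24$ is necessary, but the boundary case $\lambda = 23/24$ requires the slightly stronger observation (which your own computation supports) that the long-run rate is in fact strictly below $23/24$, since the state $(E,E)$ that attains $23/8$ cannot be re-entered at the start of every window.
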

\begin{proof}
    To see why this is true, consider 3 consecutive time steps. To analyze the maximum processing rate, we look at the scenario where the queue sends a packet in each of these steps. We claim that with probability at least $1/8$ it holds that in one of the three times, the queue must fail to have a packet accepted by a server. 
    Assume that the packet sent in the first of the three periods is accepted into a buffer but not processed for two steps. This event has probability at least $\frac14$. With coordination, a packet in the next period will be sent to the other server. Assume that this packet is not processed when it is sent, which has probability at least $\frac12$. Now in the third period, both buffers are full, so no packet can be accepted. 
    
This proves that during three consecutive steps, the maximum expected service rate is at most $\frac{\frac18\cdot 2+\frac78 \cdot 3}{3}=\frac{23}{24}.$
Thus, if the arrival rate $\lambda$ is higher, we have a buildup of the queue linear in $T$.  
\end{proof}

%%%%%%%%%%%%%%%%%%%%%%%%%%%
\section{Analysis}\label{sec:analysis}
Our main result is the following theorem. This result holds under either bandit or full feedback, and applies to arbitrary learning algorithms, as long as they achieve sublinear regret.

\begin{theorem}
\label{thm:main}
Assuming $\sum_i\lambda_i < \frac{1}{3} \sum_j \mu_j$, $\lambda_i < \frac12$ for all $i$, and all queues use a form of learning guaranteeing 
% no-regret 
low regret with high probability to identify servers they can use, the system remains stable with the expected number of packets in the system bounded by a (time-independent) constant at all times. 
\end{theorem}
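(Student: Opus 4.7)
The plan is to argue by contradiction: assume the expected total queue length is unbounded, and pick a sufficiently long time window $[T_0,T_0+L]$ during which many queues hold backlogs throughout. The goal is to show that each such backlogged queue $i$ achieves a realized acceptance rate strictly larger than $\lambda_i$ by a universal constant, which forces the expected number of packets leaving the queues in the window to exceed the arrivals and produces negative drift in the Lyapunov function $\Phi(t)=\sum_i Q_i(t)$, contradicting unboundedness and yielding stability in the sense of the definition above.

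The first step is to formalize the consequence of the no-regret hypothesis. Because the feedback is $0/1$ on acceptance, for any backlogged queue $i$ the realized average acceptance rate over the window is, with high probability, within $o(1)$ of the rate of the \emph{best fixed mixed-over-servers strategy in hindsight}, evaluated against the opponents' and the servers' actual realized trajectories. It therefore suffices to exhibit, for each queue $i$, a single reference strategy that achieves average acceptance rate at least $(1+\delta)\lambda_i$ for a universal $\delta>0$ against \emph{any} behavior of the others.

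The second step is to construct and analyze such a reference strategy. A natural choice is $p^\star_{ij}$ proportional to $\mu_j$ with total mass $2\lambda_i$ (the factor $2$ uses $\lambda_i<\tfrac12$ to keep the sending rate below $1$ so the strategy is feasible). The core single-server calculation is the one-packet-buffer Markov chain: when the aggregate send rate at server $j$ is $X_j$, the buffer is empty with stationary probability $\pi_0^{(j)}=\mu_j/(X_j(1-\mu_j)+\mu_j)$ and the acceptance rate is $X_j\pi_0^{(j)}$. The hypothesis $3\sum_i\lambda_i<\sum_j\mu_j$ keeps $X_j$ bounded away from $\mu_j$ under the reference strategy, so $\pi_0^{(j)}$ is bounded below by a positive constant; accounting additionally for the uniform tie-breaking among simultaneous senders (step~2(ii) of the model) then yields queue $i$'s reference acceptance rate $\ge(1+\delta)\lambda_i$ for a universal $\delta$, where the factor of $3$ in the capacity assumption supplies exactly the slack needed for both the buffer-occupancy discount and the tie-breaking discount.

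Combining the two steps gives the Lyapunov-drift contradiction and the theorem. The principal obstacle I anticipate is the second step: the buffer couples successive rounds, so the reference strategy's rate is \emph{not} a simple sum of single-server quantities, and adversarial opponents may inject bursty traffic that transiently depresses the empty-buffer probability below its stationary value. A likely route is to partition $[T_0,T_0+L]$ into sub-windows that are both long enough for each server's buffer chain to mix and short enough that the queues' mixed strategies are effectively stationary, and then patch together local guarantees using concentration of the per-sub-window acceptance rates together with the high-probability bandit regret bound.
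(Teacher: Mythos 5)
Your approach is genuinely different from the paper's, and it has a gap that I do not see how to repair in its current form.

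The paper never analyzes the buffer Markov chain or tries to pin down per-server send rates. Instead it sets up a dichotomy, per window of length $T$: either (a) every server holds a packet in its buffer for at least half the steps, in which case Condition~\ref{con:service_rate} gives aggregate throughput at least $\frac12\sum_j\mu_j - m\delta T$, which exceeds total arrivals by $\Omega(T)$ under $\sum_j\mu_j > 3\sum_i\lambda_i$ (this is where the factor 3 is used); or (b) some server $j$ is open (empty buffer, no arriving packet) at least half the steps, in which case the best fixed arm in hindsight for any backlogged queue $i$ is server $j$, which would have accepted a packet at least $T/2$ times on the \emph{realized} trajectory, so the no-regret bound forces queue $i$ to clear roughly $(\frac12 - \delta)T > (\lambda_i + \delta)T$ packets. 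The drift is then fed into a Pemantle--Rosenthal moment bound on the potential $\Phi = \sum_i (N_i - (\lambda_i^2 + 3\delta)T)^+$. Crucially, case (b) needs no stationarity, no mixing, and no control over per-server load: openness is a realized-trajectory event, so the hindsight-best-arm comparison is immediate.

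The gap in your step two is that the global constraint $3\sum_i\lambda_i < \sum_j\mu_j$ does not bound the realized send rate $X_j$ at any \emph{individual} server away from $\mu_j$. Since the opposing queues are not under your control, they can concentrate traffic on a subset of servers (and may even do so while remaining no-regret, transiently or in equilibrium selection), driving those buffers to be nearly always full. Your pre-fixed reference strategy $p^\star_{ij}\propto\mu_j$ with mass $2\lambda_i$ then wastes the probability it places on congested servers. Concretely, with two servers of rate $\mu=\frac12$, $n$ small queues summing to $\sum\lambda_i=\frac14$ (so $3\sum\lambda_i < 1 = \sum\mu_j$), and all opponents flooding server~1, the split strategy for queue $i$ achieves roughly $\lambda_i - \lambda_i^2$, strictly \emph{below} $\lambda_i$. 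The best fixed arm in hindsight (send everything to server~2) does fine here, but that arm depends on the opponents' realized trajectory, contradicting your claim of a single reference strategy that works ``against any behavior of the others.'' Separately, even where $X_j$ is small, invoking the stationary distribution $\pi_0^{(j)}$ requires a mixing argument that you flag but do not supply, and the hindsight reward of a fixed arm in this model is a realized-trajectory quantity, not a stationary expectation, so the stationary analysis would not plug directly into the bandit regret bound in any case. The paper's busy/open dichotomy is precisely the device that avoids all of these issues.
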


The key tool in our analysis is the following theorem of \cite{pemantle1999moment}. Informally, the theorem states that a stochastic sequence that has negative drift when it grows large and is sufficiently regular (has bounded moments) is bounded by a constant at all times.

\begin{theorem}
\label{thm:pemantle}
Let $X_1,X_2,\ldots$ be a sequence of nonnegative random variables with the property that
\begin{enumerate}[leftmargin=*, itemindent=0em]
    \item There exist constants $\alpha,\beta>0$ such that  
    $
        \mathbb{E}[X_{t+1}-X_t\vert \mathcal{F}_t \& X_t\ge \beta]<-\alpha,
    $ 
    where $\mathcal{F}_t$
    denotes the history
    until period $t$. 
    
    \item There exist $p>2$ and a constant $\theta>0$ such that for any history, 
    $ 
        \mathbb{E}[\vert X_{t+1}-X_t\vert^p\vert \mathcal{F}_t]\leq \theta.
    $
\end{enumerate}
Then, for any $0<r<p-1$, there exists an absolute constant $M=M(\alpha,\beta,\theta,p,r)$ not depending on $t$ such that $\mathbb{E}[X_t^r]\leq M$ for all $t$.
\end{theorem}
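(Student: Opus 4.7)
My plan is a Lyapunov-function / drift argument combined with an excursion analysis. Take $V(x)=(1+x)^r$ and write $\Delta_t=X_{t+1}-X_t$. Taylor-expanding $V(X_t+\Delta_t)=V(X_t)+V'(X_t)\Delta_t+\tfrac{1}{2}V''(\xi_t)\Delta_t^{2}$ with $\xi_t$ between $X_t$ and $X_{t+1}$, I would split the conditional expectation into the typical event $\{|\Delta_t|\le X_t/2\}$ and the rare event $\{|\Delta_t|>X_t/2\}$.

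On the typical event, condition~1 makes the linear term at most $-r\alpha(1+X_t)^{r-1}$ once $X_t\ge\beta$, and the quadratic remainder is $O((1+X_t)^{r-2})$ by Jensen applied to condition~2 (giving $\mathbb{E}[\Delta_t^{2}\mid\mathcal{F}_t]\le\theta^{2/p}$, which is finite since $p>2$). On the rare event, Markov's inequality gives $\Pr(|\Delta_t|>X_t/2\mid\mathcal{F}_t)\le 2^{p}\theta/X_t^{p}$, and H\"older with exponents $p/r$ and $p/(p-r)$ (legal since $r<p-1<p$) yields $\mathbb{E}[V(X_{t+1})\mathbf{1}_{\{|\Delta_t|>X_t/2\}}\mid\mathcal{F}_t]=O(X_t^{-(p-r)})$. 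Both error terms are of strictly smaller order than $(1+X_t)^{r-1}$ precisely because $r<p-1$; choosing $B\ge\beta$ large enough to absorb them produces a one-step drift inequality
\[
\mathbb{E}\bigl[V(X_{t+1})-V(X_t)\mid\mathcal{F}_t\bigr]\;\le\;-\tfrac{r\alpha}{2}(1+X_t)^{r-1}+K\,\mathbf{1}_{\{X_t<B\}}.
\]

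Since the drift is only sublinear in $V(X_t)$, a direct Gr\"onwall-type iteration of this inequality does not yield a uniform pointwise moment bound, so I would finish with an excursion argument. Let $\tau_1<\tau_2<\cdots$ be the successive hitting times of $[0,B]$. The drift condition combined with condition~2 (via a Freedman-type concentration inequality for supermartingales with bounded $p$-th-moment increments) gives polynomial-moment bounds on the inter-hitting times $\tau_{k+1}-\tau_{k}$. A Doob-type maximal inequality applied to the supermartingale $V(X_{t\wedge\tau_{k+1}})+\tfrac{r\alpha}{2}\sum_{\tau_k\le s<t\wedge\tau_{k+1}}(1+X_s)^{r-1}$ then bounds $\mathbb{E}[\sup_{\tau_k\le t<\tau_{k+1}}V(X_t)\mid\mathcal{F}_{\tau_k}]$ in terms of $V(X_{\tau_k})$, and $\mathbb{E}[V(X_{\tau_k})]$ itself is controlled because the overshoot $X_{\tau_k}-B$ has finite $p$-th moment by condition~2. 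Since at every time $t$ the random variable $V(X_t)$ is dominated by the supremum of $V$ on the currently active excursion, combining these estimates yields $\sup_t\mathbb{E}[V(X_t)]\le M$ and hence the claimed bound $\sup_t\mathbb{E}[X_t^r]\le M$.

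The main obstacle is the excursion step: the sublinear drift forces one to ``spend'' the $p$-th moment budget twice -- once to control individual large increments in the rare-event part of the drift computation, and once again to control the overshoot when the process re-enters $[0,B]$ and to bound inter-hitting times. The hypothesis $r<p-1$ is exactly the threshold that leaves enough slack in both places; getting the exponent bookkeeping right in the H\"older and Markov steps, and verifying that the chosen supermartingale is genuinely integrable, is where the care is needed.
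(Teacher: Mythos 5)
The paper does not actually prove this theorem: it states it verbatim as a citation to Pemantle and Rosenthal (1999) and uses it as a black box. So there is no in-paper proof to compare against, and your task was effectively to reconstruct the cited argument. Your sketch has the right architecture, and indeed it matches the structure of Pemantle--Rosenthal's original proof: Lyapunov function $(1+x)^r$, a one-step drift inequality obtained by splitting into typical and large increments, and an excursion/regeneration argument to upgrade the sublinear drift to a uniform-in-$t$ moment bound. You also correctly locate $r<p-1$ as the critical threshold. One arithmetic slip: H\"older on the rare event gives $\mathbb{E}\bigl[V(X_{t+1})\mathbf{1}\{|\Delta_t|>X_t/2\}\mid\mathcal F_t\bigr]=O\bigl((1+X_t)^r\cdot X_t^{-(p-r)}\bigr)=O(X_t^{2r-p})$, not $O(X_t^{-(p-r)})$ --- you dropped the factor $\mathbb{E}[(1+X_{t+1})^p\mid\mathcal F_t]^{r/p}=O((1+X_t)^r)$; the requirement $2r-p<r-1$ is still equivalent to $r<p-1$, so the conclusion survives, but the intermediate exponent is off by $r$.

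The genuine gap is in closing the excursion step, which is where the substance of the proof lies. Two problems. First, a Freedman-type inequality assumes almost surely bounded martingale increments; under only a conditional $p$-th moment bound you need a Fuk--Nagaev/Rosenthal-type tail inequality, which yields polynomial rather than exponential decay of the excursion-duration tail --- this is enough, but it is a different tool than the one you invoke. Second, and more substantively, the final sentence ``since $V(X_t)$ is dominated by the supremum of $V$ on the currently active excursion, combining these estimates yields $\sup_t\mathbb{E}[V(X_t)]\le M$'' is an assertion, not a proof: the excursion covering a fixed clock time $t$ is length-biased, so a bound on $\mathbb{E}\bigl[\sup_{\tau_k\le s<\tau_{k+1}}V(X_s)\,\big|\,\mathcal F_{\tau_k}\bigr]$ for each \emph{fixed} $k$ does not directly bound $\mathbb{E}[V(X_t)]$. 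One must decompose by the last time $s\le t$ the process was in $[0,B]$ and show that the contribution from an excursion that has already survived $t-s$ steps is summable in $t-s$; this requires a H\"older trade-off between the (polynomially small) survival probability and the (possibly inflated) conditional supremum on that event, and it is exactly where the hypothesis $r<p-1$ is consumed once more. Your sketch flags this as the hard part but then asserts the conclusion without carrying out the trade-off.
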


To use this theorem, we consider a long enough time period of length $T$ and use a potential function that depends on the number of packets in each of the queues at the start of that time period. Assume that queue $i$ has $N_i$ packets at the start of the period. We will use the potential function $\Phi=\sum_i (N_i-(\frac{1}{2}\lambda_i+2\delta) T)^+$, where $x^+=\max(x,0)$, and will aim to show that between the beginning and end of a $T$ long period, the change in this potential satisfies the conditions of Theorem \ref{thm:main}.

For $T$ large enough, we expect the empirical packet arrivals and servers' service rates are all close enough to their expectations, and each queue has small enough regret for their learning algorithm. We will call a period of length $T$ \emph{good}$_\delta$ for a parameter $\delta>0$ (to be chosen later) if the following three conditions are all satisfied. Later, we will also account for what happens in the ``bad event'' when \emph{good}$_\delta$ does not occur.

\begin{condition}\label{con:queue_arrival}
    The number of packets arriving at each queue $i$ during each period of length $\hat T \le T$ is at most $\lambda_i\hat T+\delta T$.
\end{condition}

\begin{condition}\label{con:service_rate}
    Each server $j$ that has a packet in its buffer (possibly one that just arrived) for at least half of the steps during this $T$-length period succeeds in serving at least $(\frac12\mu_j-\delta) T$ packets. 
\end{condition}

\begin{condition}\label{con:learning}
    The learning algorithm of each queue $i$ accumulates regret at most $\delta T$ during the $T$-length period for the decisions it made during this period when it had a packet to send.
\end{condition}

To show the expected decrease in the potential function, we consider whether a period of length $T$ satisfied the condition of being 
\emph{good}$_\delta$. 
We will show below (in Lemmas~\ref{lm:Chernoff} and~\ref{lm:learning}) that the probability that \emph{good$_\delta$} fails to hold is bounded by $(m+2n)\eta$ for a small $\eta$, as shown by the two lemmas.
The maximum possible increase in the potential function during a period of $T$ steps is at most $nT$ (if packets arrive during each step at each of the queues, and none reach the servers), so bad events contribute at most $(m+2n)\eta nT$ to the expected change in potential over the $T$ steps. 

To show the decrease in the potential in a \emph{good$_\delta$} period of length $T$, the following two lemmas separately consider servers that serve packets most of the time and those that do not.

\begin{lemma}
\label{lm:full_server}
  If all servers have a packet in their buffer (possibly one that just arrived) for more than half of the iterations, and we are in the \emph{good$_\delta$} case, then the total number of packets at the queues decreases by at least  $(\frac12 \sum_j \mu_j-\sum_i\lambda_i -(n+m)\delta) T$.
\end{lemma}

\begin{proof}
By Condition \ref{con:service_rate}, the total number of packets served is at least $\sum_j (\frac12\mu_j -\delta)T$. By Condition \ref{con:queue_arrival}, the total number of arriving packets is at most $\sum_i (\lambda_i+\delta)T$. Combining these two bounds establishes the lemma. 
\end{proof}

\begin{lemma} \label{lm:open_server}
Suppose there is a server whose buffer is empty more than half of the $T$ iterations, and we are in the $\textit{good}_\delta$ case. Consider a queue $i$ and assume that $\delta < \frac12 (\frac12 - \lambda_i)$.
Then, either the number of packets at this queue decreases during the period of length $T$, or the number of packets left in it at the end of the period is at most $N'_i \leq (\frac{1}{2} \lambda_i+2\delta) T$.
If at the start of the period there are at least $T$ packets in the queue, then the number decreases by at least $(\frac{1}{2} - \lambda_i - 2\delta) T$.
\end{lemma}
\begin{proof}
If the queue has packets in every step in the window, by the no-regret Condition \ref{con:learning}, it clears at least $(\frac{1}{2}-\delta)T$.
By Condition \ref{con:queue_arrival}, it receives at most $(\lambda_i+\delta) T$ packets, so the number of packets goes down by at least $(\frac{1}{2} - \lambda_i-2\delta) T$.
This happens, in particular, when the queue has at least $T$ packets at the beginning of the window.

Now suppose the queue has packets in the last $\hat{T}$ steps, but was empty in the previous step.
If $\hat{T} \leq \frac{1}{2}T$, then by Condition \ref{con:queue_arrival}, in the last $\hat{T}$ steps it receives at most $(\frac{1}{2} \lambda_i + \delta) T$ new packets.
So $N'_i \leq (\frac{1}{2} \lambda_i + \delta) T$.

The rest of the proof focuses on the case when $\hat{T} > \frac{1}{2}T$.
Let $I$ be the number of steps queue $i$ and the server were empty. So the server was empty at least $\frac12 T-I$ steps that queue $i$ was attempting to send packets. By the no-regret Condition \ref{con:learning}, queue $i$ must have succeeded in clearing close to this many packets.
Let $S$ be the number of packets queue $i$ received before the last $\hat{T}$ steps.
Before the last $\hat{T}$ steps, queue $i$ had and cleared $N_i+S$ packets, so it was empty at most $T - \hat{T} - N_i - S$ steps, and this is also an upper bound for $I$. 
Using this upper bound with the no-regret Condition \ref{con:learning}, we get that queue $i$ cleared at least $\frac12 T - I -\delta T \geq  \hat{T} - (\frac12+\delta) T + N_i + S$ packets.
Queue $i$ had in total $N_i+S$ packets before the last $\hat{T}$ steps, where by Condition \ref{con:queue_arrival} it received at most $\lambda_i \hat{T} + \delta T$ packets.
Therefore:
\begin{eqnarray*}
    N_i' &\leq& \left(N_i+S+\lambda_i \hat T + \delta T\right)-\left(\hat{T} - \left(\frac12+\delta\right) T + N_i + S\right)\\
    & = &\left(\frac12 + 2 \delta\right) T + (\lambda_i-1) \hat{T} \leq \left(\frac12 + 2 \delta\right) T + \frac12(\lambda_i-1) T =
    \left(\frac12 \lambda_i + 2 \delta \right) T,
\end{eqnarray*}
where the last inequality follows from $\lambda_i < 1$ and $\hat{T} > \frac12 T$.
\end{proof}

To prove Theorem~\ref{thm:main} we now need to bound the probability that \emph{good$_\delta$} fails to hold. 
We use Chernoff and union bounds to bound from below the probability that conditions~\ref{con:queue_arrival} and~\ref{con:service_rate} hold when choosing a sufficiently large $T$.

\begin{lemma}
\label{lm:Chernoff}
For any constants $\delta>0$ and $\eta>0$, by choosing the period length $T$ high enough, we can guarantee that the probability that conditions \ref{con:queue_arrival} or \ref{con:service_rate} fail to hold for a single queue or single server is bounded by $\eta$, and hence with probability $(1-\eta(n+m))$ the conditions hold. 
\end{lemma}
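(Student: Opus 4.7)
Both conditions concern sums of bounded random variables over a window of length $T$, so the proof is a standard application of Chernoff/Hoeffding concentration combined with a union bound. I treat the two conditions separately, bounding each single-queue or single-server failure probability by $\eta$ once $T$ is large enough; a final union bound over the $n$ queues and $m$ servers then yields probability at least $1-\eta(n+m)$ that all three conditions hold simultaneously.

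For Condition~\ref{con:queue_arrival}, the arrivals $B^i_1, B^i_2, \ldots$ at queue $i$ are i.i.d.\ $\text{Bern}(\lambda_i)$ by definition, so Hoeffding's inequality applied to a fixed subperiod of length $\hat T \le T$ yields
\[
\Pr\!\left[\sum_{t} B^i_t > \lambda_i \hat T + \delta T\right] \;\le\; \exp(-2\delta^2 T^2/\hat T) \;\le\; \exp(-2\delta^2 T),
\]
using $\hat T \le T$. There are at most $\binom{T+1}{2} \le T^2$ choices of subperiod, so a union bound shows the failure probability for queue $i$ is at most $T^2 \exp(-2\delta^2 T)$, which drops below $\eta$ for $T$ sufficiently large.

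For Condition~\ref{con:service_rate}, let $A_t\in\{0,1\}$ indicate whether server $j$ has a packet in its buffer at step $t$ (including one just accepted), and let $\xi_t\sim\text{Bern}(\mu_j)$ be the independent step-$t$ processing coin, so that the number of packets served equals $\sum_{t=1}^T A_t\xi_t$. Naive Chernoff does not apply here because the buffer-occupancy sequence $\{A_t\}$ is adaptive: it depends on past service outcomes, past arrivals, and all queues' actions. However, $A_t$ is measurable with respect to $\mathcal{F}_{t-1}$ together with the step-$t$ actions, while $\xi_t$ is independent of everything determined up to that point. Thus $M_T:=\sum_{t=1}^T(A_t\xi_t-\mu_j A_t)$ is a martingale whose increments lie in $[-1,1]$, and Azuma--Hoeffding gives $\Pr[|M_T|>\delta T]\le 2\exp(-2\delta^2 T)$. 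On the complementary event one has $\sum_t A_t\xi_t\ge\mu_j\sum_t A_t-\delta T$, so whenever $\sum_t A_t\ge T/2$ the server serves at least $(\mu_j/2-\delta)T$ packets, exactly as required.

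The only mildly delicate step is this decoupling for Condition~\ref{con:service_rate}: because the buffer-occupancy process is entangled with prior service outcomes and with the (potentially adversarial) queue strategies, one cannot treat $\sum_t A_t \xi_t$ as a sum of i.i.d.\ variables, and must instead pass through the martingale formulation to isolate the i.i.d.\ service coins $\xi_t$. Everything else is a routine Chernoff calculation; choosing $T$ large enough to push each of the $n+m$ individual failure probabilities below $\eta$ and union-bounding completes the proof.
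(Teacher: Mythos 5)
Your proof is correct. The paper offers no proof of Lemma~\ref{lm:Chernoff}---it simply asserts, just before the statement, that ``We use Chernoff and union bounds''---so there is no explicit argument to compare against, but your approach is exactly what that remark envisions. You also correctly flag the one non-trivial point: the buffer-occupancy sequence $A_t$ for Condition~\ref{con:service_rate} is adaptive (it depends on past service outcomes and the queues' learning dynamics), so a naive i.i.d.\ Chernoff bound does not apply, and the Azuma--Hoeffding martingale bound on $\sum_t A_t(\xi_t-\mu_j)$ is the right tool to decouple the i.i.d.\ service coins from the history-dependent occupancy. For Condition~\ref{con:queue_arrival}, union-bounding over all $O(T^2)$ contiguous subwindows is slightly more than the paper's later uses require (Lemma~\ref{lm:open_server} only invokes prefixes/suffixes of the $T$-window), but it is harmless since the polynomial factor is swamped by the exponential decay. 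The final union bound over the $n$ queues and $m$ servers matches the lemma's $(1-\eta(n+m))$ guarantee.
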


To guarantee that Condition \ref{con:learning} holds with high probability, we will assume that queues use learning algorithms that have high-probability regret guarantees: 
Let $R_T$ be the regret of a queue. We say that a queue has low regret with high probability if for every  $\eta > 0$, there exists a sublinear bound $\epsilon(T) = o(T)$ such that $Pr[R_T \leq \epsilon(T)] > 1 - \eta$.
This can be guaranteed, for example, using the algorithm EXP3.P.1 of \cite{DBLP:journals/siamcomp/AuerCFS02}.

\begin{lemma}
\label{lm:learning}
Suppose that all the queues use learning methods that guarantee low regret with high probability.
Then, for every $\delta>0$ and $\eta>0$, by choosing the period length $T$ high enough we can guarantee that the probability that Condition \ref{con:learning} holds for a single queue is at least $1-\eta$. Hence, the condition holds with probability at least $(1-n\eta).$
\end{lemma}

We are now ready to start proving the main result.

\begin{proof} (Theorem \ref{thm:main}) 
Recall the potential function $\Phi=\sum_i (N_i-(\frac12 \lambda_i+2\delta) T)^+$. Clearly, $\Phi$ has bounded moments, as the maximum possible increase in the potential function over a period of length $T$ is when a packet arrives at every single queue during each step and no packets are accepted at any of the servers. That is a total increase of $nT$. Similarly, the maximum possible decrease in the potential function over a period of length $T$ is when no packets arrive, and each queue succeeds in sending a packet to a server at each step. That is a total decrease of $nT$. 

Next, we want to prove that when $\Phi>Tn$, the potential is expected to decrease. Recall that when \emph{good}$_\delta$ fails to hold the potential can increase by as much as $nT$. By Lemmas \ref{lm:Chernoff} and \ref{lm:learning}, this can contribute at most $\eta (m+2n)nT$ to the expectation.

To analyze the case when \emph{good}$_\delta$ holds, we consider two cases separately. If all servers have packets to try to serve (in their buffer, or just arriving) at least half of the time, then by Lemma \ref{lm:full_server} the total number of packets decreases by at least $(\frac12 \sum_j \mu_j -\sum_i\lambda_i -(n+m)\delta)T$. Some of this decrease may not affect the potential function, since some of the cleared packets can come from queues that contribute zero to the potential. However, this can only affect $\sum_i (\frac12\lambda_i+2\delta) T$ packets, and at most $(\frac12\lambda_i+2\delta) T$ packets in queue $i$, and so the potential is decreasing by at least 
\begin{equation*}
% \label{eq:decrease2}
\Big(\frac12 \sum_j \mu_j -\frac32 \sum_i \lambda_i -(3n+m)\delta \Big)T.
\end{equation*}
By the assumption that $\sum_j \mu_j >3\sum_i \lambda_i$, this is an $\Omega(T)$ decrease, if we choose a small enough $\delta$.

Now consider the case where a server is open (i.e., its buffer is empty, and no new packet arrived) half the time. If $\Phi>nT$ then some queue $i$ will have at least $T$ packets. In this case, by Lemma \ref{lm:open_server}, this queue will have its packet count decrease by at least
$(\frac12 -\lambda_i-2\delta)T$, contributing this decrease to the potential function. Lemma \ref{lm:open_server} also shows that all other queues will have their contribution to the potential decrease, unless it is zero at the end of the $T$-length period. This shows a total decrease in the potential of at least 
$(\frac12 -\lambda_i-2\delta)T=\Omega(T)$, again assuming $\delta$ is small enough. 

Putting together the expected increase when \emph{good}$_\delta$ fails to hold and expected decrease in the two cases when \emph{good}$_\delta$ holds (and using that $\lambda_i < \frac12$), we get that the potential function is expected to decrease after a $T$-length period by at least
\begin{equation*} \textstyle
T\min(\frac12 -\lambda_i-2\delta, \ \frac12 \sum_j \mu_j -\frac{3}{2} \sum_i\lambda_i  -(3n+m)\delta)
 - T\eta (m+2n)n.
\end{equation*}
To guarantee the required expected decrease of the potential, we choose $\delta$ small enough so that both terms in the $\min$ are at least a positive constant (so in the case  \emph{good}$_\delta$ happens the potential decreases by $\Omega(T)$) and then choose $\eta$ small enough to make the total positive.
\end{proof}

While we do not know if the factor of 3 in the main theorem is a tight worst-case bound, 
the following example shows that we need at least a factor of 2 to ensure stability. 

\begin{theorem}[Lower Bound] \label{thm:lower-bound-example}
Consider a queue with $1/2$ arrival rate, facing $k$ servers, each with only capacity $1/n$. Assume that the server chooses a uniform random server at each step. This sending plan will clearly satisfy the no-regret Condition \ref{con:learning}. We claim that we need at least $n$ servers to make the solution stable proving a lower bound of 2 needed for the needed increase in server capacity. 
\end{theorem}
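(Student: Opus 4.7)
The plan is to bound from above the maximum long-run rate at which the queue's packets can be accepted under uniform random sending, and to show that when $k<n$ this rate is strictly smaller than $\lambda=1/2$, forcing the queue length to grow unboundedly.

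I first analyze a single server $j$ assuming the queue always has a packet to send (the ``infinite-reservoir'' regime). The buffer state $X^j_t\in\{0,1\}$ then evolves as an independent two-state Markov chain determined by the two-phase (arrival-then-service) dynamics of Section~\ref{sec:model}. Writing down the transition probabilities and solving the balance equations gives the stationary probability $\pi_1=(n-1)/(n+k-1)$ that the buffer is full. Consequently the per-server acceptance rate is $(1/k)\pi_0 = 1/(n+k-1)$, and the total clearance rate summed over the $k$ servers is $k/(n+k-1)$. This is at most $\lambda=1/2$ precisely when $k\le n-1$, with equality at $k=n-1$. Allowing the queue to be empty part of the time only reduces the per-server arrival rate from $1/k$ to some $\alpha/k$, which makes the total clearance rate even smaller; so no stationary clearance rate matching $\lambda$ is available whenever $k<n$.

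To convert this into an unboundedness statement for $\mathbb{E}[N_t]$ (where $N_t$ denotes the queue length), fix a window length $T$. Whenever $N_t\ge T$ the queue is non-empty throughout $[t,t+T)$ (since at most one packet leaves per step), so the buffer chains in that window coincide with those of the infinite-reservoir model. The buffer chains mix in $O(n)$ steps, so a standard Chernoff bound for Markov chains shows the empirical acceptance rate across $T$ steps is within $o(1)$ of $k/(n+k-1)$ with high probability. Combined with Condition~\ref{con:queue_arrival}, this yields $\mathbb{E}[N_{t+T}-N_t\mid N_t\ge T]\ge (\lambda - k/(n+k-1))T - o(T)$, which is strictly positive for $k\le n-2$, giving $\mathbb{E}[N_t]\to\infty$ by a standard drift argument; the borderline case $k=n-1$ corresponds to a null-recurrent reflected random walk whose expectation also diverges (like $\sqrt{t}$). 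Hence stability requires $k\ge n$, so total server capacity $k/n\ge 1 = 2\lambda$, proving the factor-of-two lower bound. The main obstacle is making the finite-window concentration step rigorous, either by coupling the buffer chains to their stationary versions after an $O(n)\ll T$ burn-in or by invoking a Markov-chain Chernoff inequality for the acceptance counter; with that in hand, the elementary fixed-point calculation above finishes the argument.
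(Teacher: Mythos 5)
Your proposal is correct and reaches the same acceptance-rate formula $k/(n+k-1)$ as the paper, and hence the same conclusion that stability needs $k \ge n$. The route is slightly different in packaging: the paper conditions on the inter-send gap $i$ (geometric with parameter $1/k$) and sums $\sum_{i\ge 1} \frac{1}{k}(1-\frac1k)^{i-1}\bigl(1-(1-\frac1n)^i\bigr)$ directly, exploiting the fact that immediately after a send the buffer is always full (the sent packet is either accepted, filling it, or rejected because it was already full). You instead track the two-state buffer chain with transition rates $(1/k)(1-1/n)$ from empty to full and $1/n$ back, and read the acceptance rate off the stationary distribution $\pi_0=k/(n+k-1)$. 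These are two standard formulations of the same renewal-type calculation, and both are fine. Where your version adds something is in the surrounding rigor that the paper leaves implicit: you explicitly handle (a) the reduction to the always-sending regime by noting that when $N_t\ge T$ the queue is non-empty throughout a window of length $T$, (b) the monotonicity point that sending less often only hurts, (c) the finite-window concentration via Markov-chain mixing, and (d) the null-recurrent boundary case $k=n-1$. None of this changes the answer, but it turns the paper's heuristic ``make this stable we need $k>n-1$'' into an actual drift argument for $\mathbb{E}[N_t]\to\infty$. One small thing to double-check in your write-up: the phrase ``per-server arrival rate from $1/k$ to some $\alpha/k$'' should make clear that the resulting acceptance rate is increasing in the send rate, so the infinite-reservoir regime genuinely upper-bounds the true clearance rate; this monotonicity is intuitive but worth stating since the buffer occupancy also depends on the send rate.
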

\begin{proof}
So with $k$ servers, the chance that when the queue sends to a server, the previous time it did so was $i$ steps ago is $1/k \cdot (1-1/k)^{i-1}$. If the previous time the queue did sent there, the chance that the server is still busy is $(1-1/n)^i$, so the chance that the new packet gets accepted is
$$
\sum_{i=1}^{\infty}    1/k \cdot (1-1/k)^{i-1} (1-(1-1/n)^i)
=\frac{1}{k}\sum_{i=1} (1-\frac{1}{k})^{i-1}-\frac{n-1}{nk}\sum_{i=1}^{\infty} \big((1-\frac{1}{k})(1-\frac{1}{n})\big)^{i-1}
$$
$$
=1-\frac{n-1}{nk}\frac{1}{1-(1-1/k)(1-1/n)}
=1-\frac{n-1}{n+k-1}.
$$
To make this stable, we need this value to be above $1/2$, so we need 
$k>n-1$. This results in a total capacity of $\frac{k}{n}>\frac{n-1}{n}$, essentially double the arrival rate of the queue.
\end{proof}

%%%%%%%%%%%%%%%%%%%%%%%%%%%
\section{Experiments}\label{sec:experiments}
\begin{figure*}[ht]
\vspace{-20pt}
    \centering
    \begin{subfigure}{.42\textwidth}
        \includegraphics[width=0.9\linewidth]{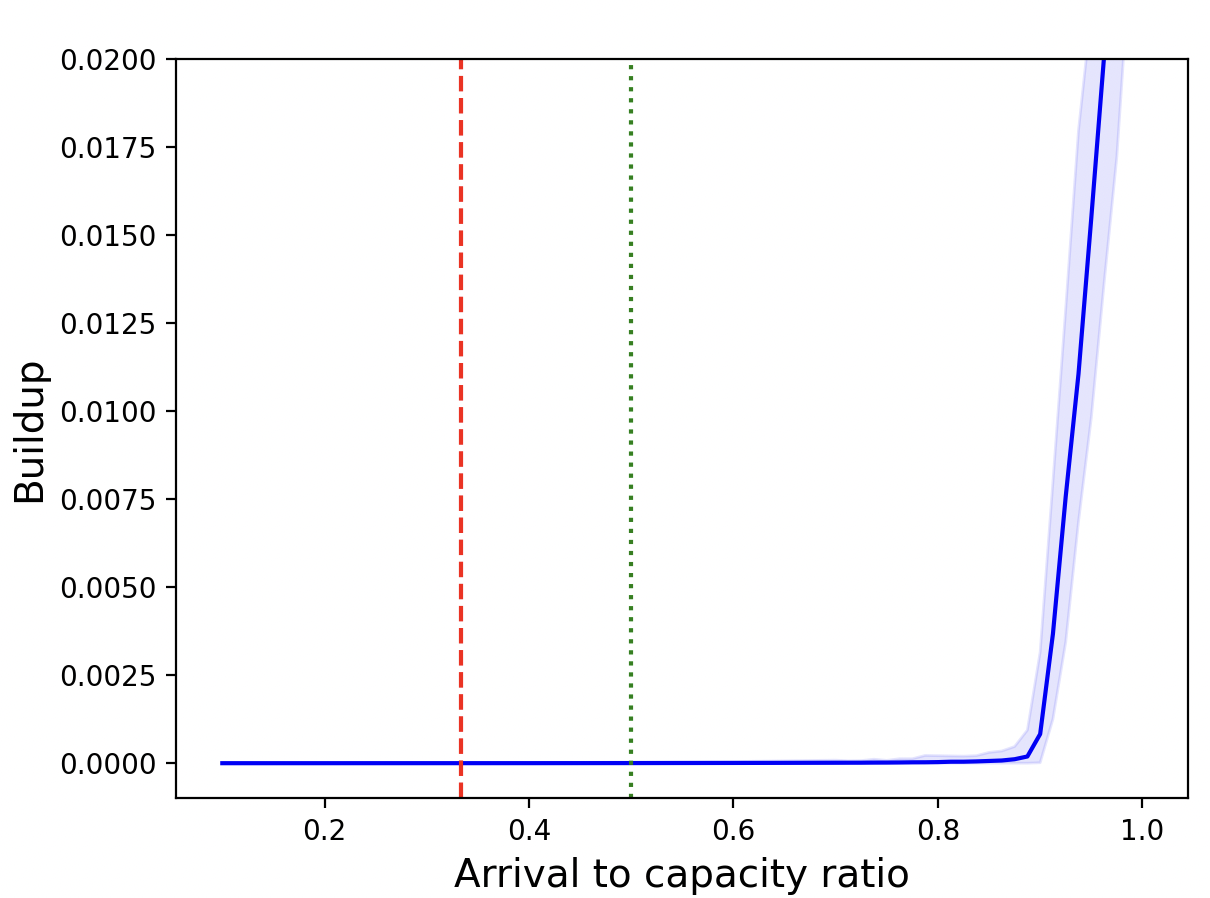} % Add figure file
        \caption{Buildup for a symmetric system with 3 servers with capacity $\mu$ and 3 queues with arrival rates $\lambda$ as a function of $\lambda/\mu$.}
        \label{fig:fig1a-buildup-vs-capacity-ratio-symmentric}
    \end{subfigure}
    \hfill
    \begin{subfigure}{.42\textwidth}
        \includegraphics[width=0.9\linewidth]{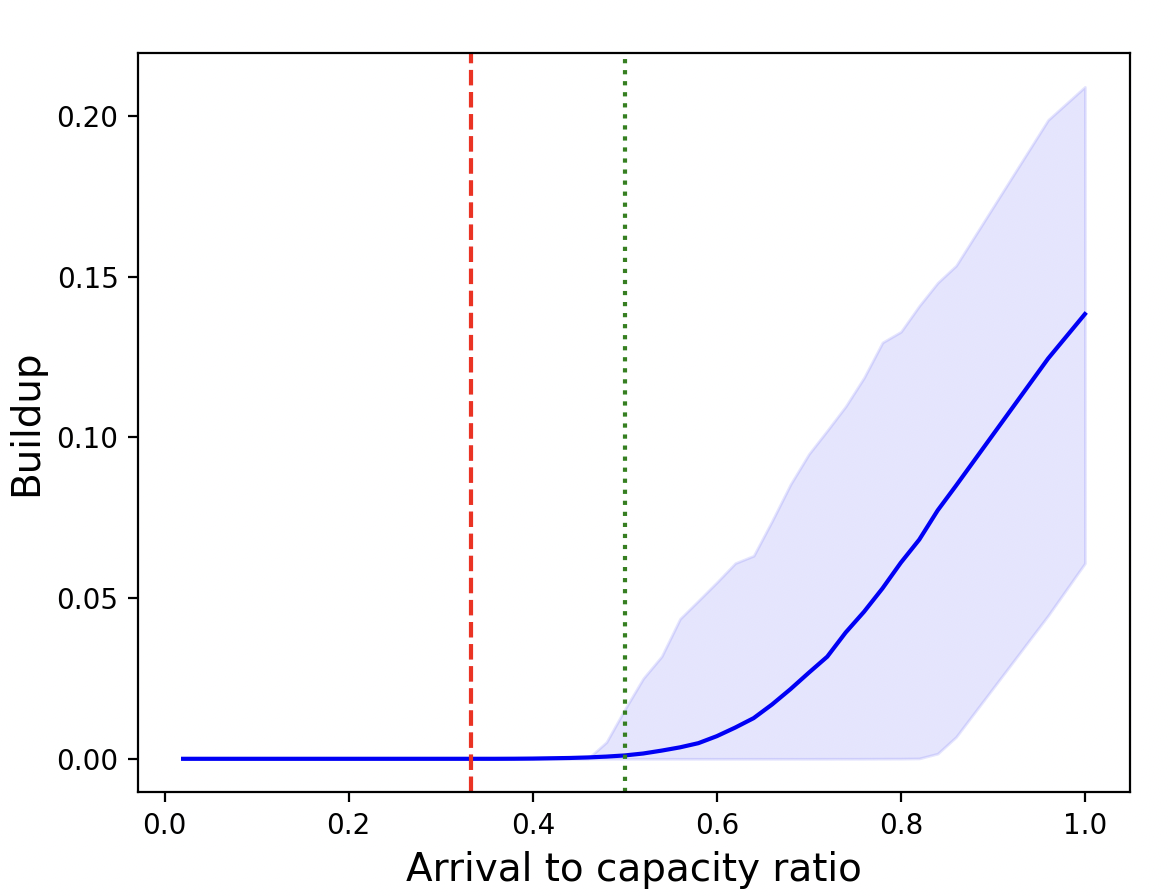} % Add figure file
        \caption{Buildup for an ensemble of 200 randomized systems with $5$ queues and $6$ servers as a function of $\sum_i \lambda_i/\sum_j \mu_j$.}
        \label{fig:fig1b-buildup-vs-capacity-ratio-non-symmentric}
    \end{subfigure}
    \caption{Empirical buildup of total queue sizes normalized by $n \cdot T$ as a function of the ratio $\sum_i \lambda_i/\sum_j \mu_j$ across different scenarios. The dashed vertical lines mark the ratios of $1/3$ and $1/2$ from our analysis.}
    \label{fig:fig1-buildup-vs-capacity-ratio}
    \vspace{-5pt}
\end{figure*}

In the preceding section, we derived theoretical stability conditions and worst-case bounds on the buildup of queue lengths. Specifically, our proof showed that in a system with singleton buffers, whenever a queue grows large, given that a condition of the total service capacity is satisfied, a no-regret property of the agents selecting servers guarantees that it will shrink at a linear rate, keeping the total lengths of queues bounded. In this section, we complement these theoretical guarantees with computational experiments to observe the typical behavior of these systems beyond the worst case, and explore the empirical impact of buffers on the game dynamics.

We conduct simulations using the EXP3 algorithm \cite{auer2002nonstochastic} (see also \cite{slivkins2019introduction}), implemented in Python. For the simulation parameters, we denote $\gamma$ for exploration rate and $T$ for the time horizon of the simulation. 

\subsection{Empirical Queue Buildup}

We start by studying the relationship between the system's arrival rate and its service capacity. The service capacity must strictly exceed the arrival rate to prevent queue buildup. This observation is seen also in typical experimental scenarios.

Figure \ref{fig:fig1-buildup-vs-capacity-ratio} illustrates the total empirical buildup, which is the sum of all packets left in queues after $T$ iterations, normalized by $n \cdot T$ (vertical axis) as a function of the arrival-to-capacity ratio $\sum_i \lambda_i/\sum_j \mu_j$ (horizontal axis). 
Figure \ref{fig:fig1a-buildup-vs-capacity-ratio-symmentric} focuses on a symmetric system with three servers such that $\mu_1 = 0.8$ and $\mu_2 = \mu_3 = 0.2$ and three queues with equal arrival rates $\lambda_i$, which range between $0.02$ and $0.4$. The simulations run for $T = 50$,$000$ steps with $\gamma = 1/\sqrt{T}$. The solid line represents the average buildup over $200$ independent simulations for each value of $\lambda_i$, and the shaded region indicates the range between minimum and maximum buildup values observed in the simulations. A clear transition emerges: when $\sum_i \lambda_i/\sum_j \mu_j > 0.9$, the buildup becomes proportional to $T$. Crucially, this transition occurs at a point above $0.5$ and below $1$ (where buildup becomes inevitable). 

Figure \ref{fig:fig1b-buildup-vs-capacity-ratio-non-symmentric} presents the same analysis for an ensemble of randomized systems with $5$ queues and $6$ servers. To generate such instances, we proceed as follows: generate $200$ base parameters by selecting 5 queue arrival rates, $\lambda_i$, and 6 server capacity rates, $\mu_j$, uniformly at random in $(0,1)$. For each value of the ratio $r = \sum_i \lambda_i/\sum_j \mu_j$ (with $r$ in $(0, 1]$), rescale the arrival rates such that the ratio $\sum_i \lambda_i/\sum_j\mu_j = r$ is satisfied. 
This process is repeated for each of the $200$ sets of base parameters. The simulations are run for $T = 50$,$000$ and %\jmcomment{Should we have the learning rate here too?} 
$\gamma = 1/\sqrt{T}$. In Figure \ref{fig:fig1b-buildup-vs-capacity-ratio-non-symmentric}, the solid line represents the average buildup over the $200$ simulation for each $r$ value; the shaded region indicates the range between the $2.5$ and $97.5$ percentiles of buildup values. The results confirm that, even for asymmetric and randomized systems, the transition point remains above $0.5$ and below $1$. In both Figure \ref{fig:fig1a-buildup-vs-capacity-ratio-symmentric} and Figure \ref{fig:fig1b-buildup-vs-capacity-ratio-non-symmentric} the vertical lines indicate where $r = \frac{1}{3}$ and $r = \frac{1}{2}$. The empirical buildups in these simulations show that while the worst-case bound is $\sum_i\lambda_i < \frac{1}{3} \sum_j \mu_j$, systems with learning agents typically do better than the worst case at clearing packets.

\begin{figure*}[t!]
\vspace{-24pt}
    \centering
    \begin{subfigure}{.42\textwidth}
        \includegraphics[width=0.9\linewidth]{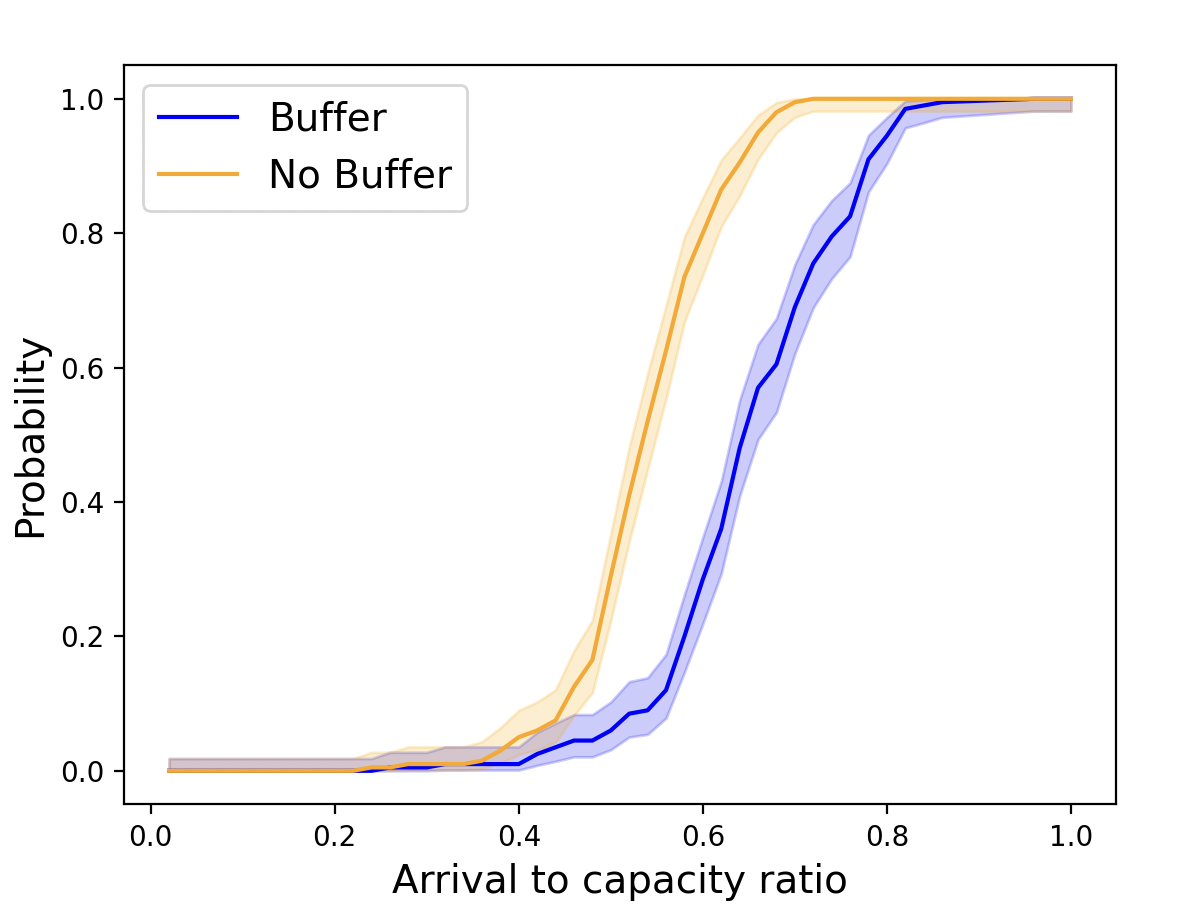} % Add figure file
        \caption{Probability that a randomized system with time horizon $T$ has a queue with buildup greater than $\sqrt{T}$ as a function of the ratio $\sum_i \lambda_i/\sum_j \mu_j$. Shaded regions show 95\% 
        % binomial-proportion 
        confidence interval for the probability estimation.}
        \label{fig:fig2a-buildup-with-vs-without-buffers}
    \end{subfigure}
    \hfill
    \begin{subfigure}{.42\textwidth}
        \includegraphics[width=0.9\linewidth]{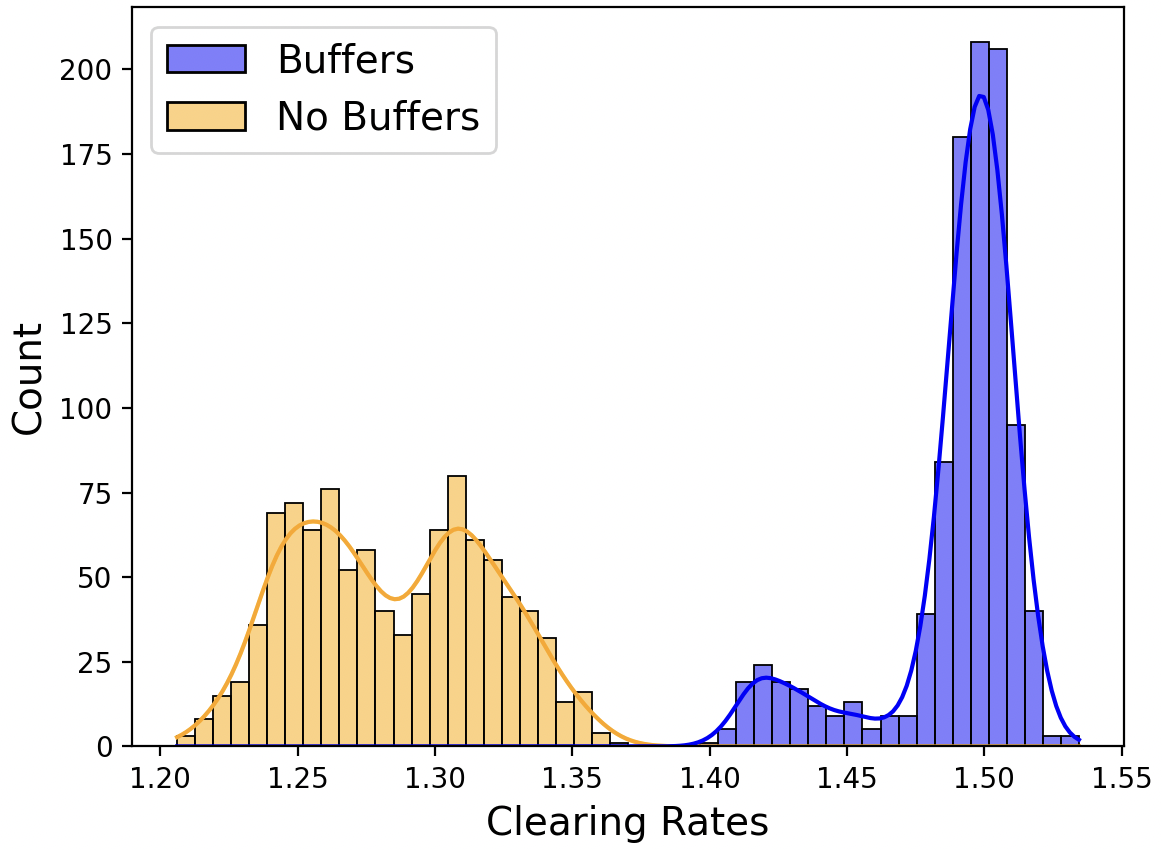} % Add figure file
        \caption{Empirical clearing rates (the number of packets cleared by a system normalized by time horizon $T$) across $1$,$000$ samples of a system with fixed parameters. The left histogram is without buffers and the right one is with buffers.}
        \label{fig:fig2b-secvice-rate-with-vs-without-buffers}
    \end{subfigure}
    \caption{A comparison of identical systems with and without buffers.}
    \label{fig:fig2-with-vs-without-buffers}
    \vspace{-5pt}
\end{figure*}

\begin{figure*}[t!]
\vspace{-12pt}
    \centering
    \begin{subfigure}{.325\textwidth}
        \includegraphics[width=1.07\linewidth]{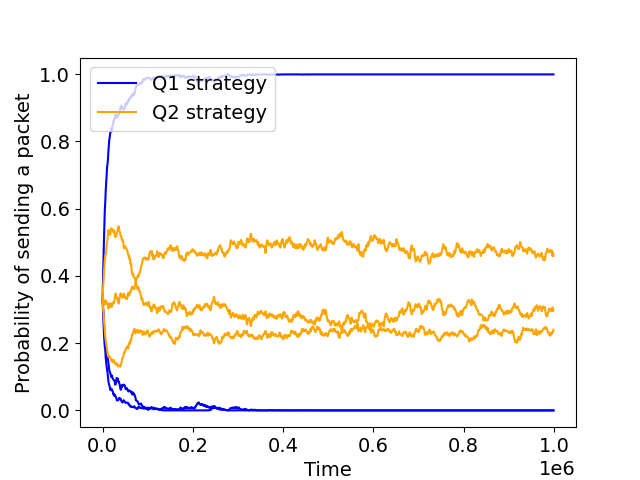} % Add figure file
        \caption{Dynamics in a system with two queues with arrival rates $\lambda_1 = 1/3$, $\lambda_2 = 1/6$ and three servers with service capacities $\mu_1 = 2/3$, $\mu_2 = 2/9$, $\mu_3 = 1/9$ as a function of time.}
        \label{fig:fig3a-dynamics}
    \end{subfigure}
    \hfill
    \begin{subfigure}{.325\textwidth}
        \includegraphics[width=1.07\linewidth]{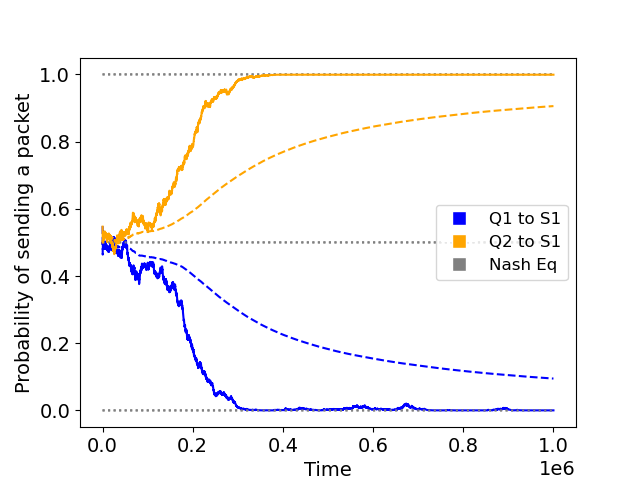} % Add figure file
        \caption{Probability of sending to server 1 in 
        a system with 
        2 queues with 
        $\lambda_i = 1/4$, and 2 servers with $\mu_i = 2/3$, compared to Nash equilibria. Dashed lines show time averages.}
        \label{fig:fig3b-dynamics-and-NE}
    \end{subfigure}
    \hfill
    \begin{subfigure}{.325\textwidth}
        \includegraphics[width=1.07\linewidth]{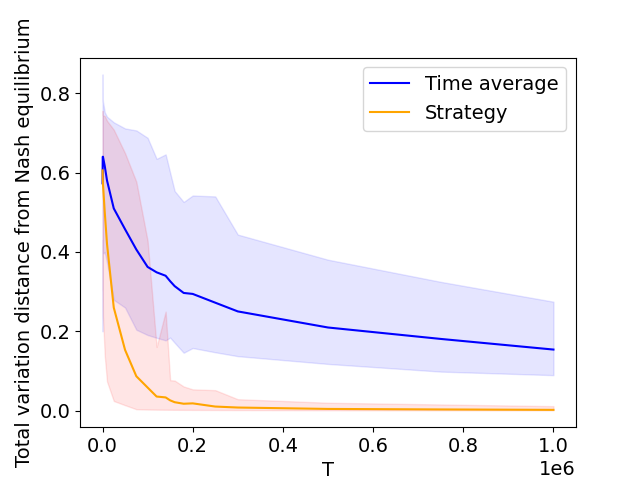} % Add figure file
        \caption{Total variation distance  
        from a pure Nash equilibrium as a function of time in multiple simulations in the system from Figure \ref{fig:fig3b-dynamics-and-NE}. 
        Shaded regions are 95\% confidence intervals.}
        \label{fig:fig3c-TV-from-NE}
    \end{subfigure}
    \caption{Dynamics of probability distributions and the empirical play across different scenarios.}
    \label{fig:fig3-dynamics}
    \vspace{-5pt}
    % \hfill
\end{figure*}

\subsection{The Impact of Buffers}
We now compare systems with vs. without buffers. 
% \etedit{
First we note that even a buffer of just one packet at each server can significantly increase server capacity. To see this, consider a system with 
a single queue receiving packets at a high rate, say with probability 
$\lambda=\frac12$ at each time step, and multiple servers, each with a service rate, say
$\mu=\frac13$. Without buffers, the condition for stability of such a system is that there is a single server with a service rate of $\mu>\lambda$, as a queue can only send a packet to one server at a time.  However, with buffers, two such servers will make the system stable if the queue alternates sending its packets to them.

Our empirical methodology for illustrating the impact of buffers follows the approach described in the previous subsection. Figure \ref{fig:fig2a-buildup-with-vs-without-buffers} illustrates the effect of adding buffers to a system in the random-instance scenario, with $n=5$ queues and $m=6$ servers as described earlier. It shows the probability that some queue in the system has at time $T$ a buildup greater than $\sqrt{T}$ as a function of the arrival-to-capacity ratio for systems with and without buffers. The solid lines show the observed frequency in $300$ independent simulations for each value of $r = \sum_i \lambda_i/\sum_j \mu_j$ in $(0, 1]$; the shaded region shows 
the $95\%$ confidence interval in estimating these probabilities. We observe that systems without buffers require a higher capacity relative to the arrival rate to prevent buildup.

Figure \ref{fig:fig2b-secvice-rate-with-vs-without-buffers} shows the 
distributions of packet clearing rates for two identical systems, with and without buffers, over $1$,$000$ simulations with $T=10$,$000$ and $\gamma = 1/\sqrt{T}$. There are $4$ queues with $\lambda_1 = 0.6, \lambda_2 = \lambda_3 = \lambda_4 = 0.3$ and $5$ servers with $\mu_1 = 0.8, \mu_2 = \mu_3 = 0.4, \mu_4 = \mu_5 = 0.2$. The clear separation between the distributions demonstrates that buffers increase efficiency. In the system without buffers, 
queues preferred the servers with higher capacity, which led to more competition and a lower clearing rate since the three high-capacity servers could not process all packets.\footnote{
The total service rate of the three 
top servers
equals
the total arrival rate, hence is not enough for stability.} In the system with buffers, queues do learn to extract good service rates also from the lower capacity servers, which increases the overall clearing rate of the system.

\subsection{Dynamics and Convergence}

Next, we look at the dynamics of the learning agents in our systems. We observe that in all parameter configurations we have tested, the dynamics converge in last iterate (up to a small noise level due to the finite time horizon and step size in our simulations) to Nash equilibria, as illustrated in Figure \ref{fig:fig3-dynamics}.  This is surprising, as even without the carryover effects that our games have, no-regret dynamics need not converge to Nash equilibria.\footnote{In fact, no-regret dynamics may fail to converge to any single coarse correlated equilibrium even in the average-iterate sense \cite{kolumbus2022auctions,kolumbus2022and}.}
Figure \ref{fig:fig3a-dynamics} depicts the dynamics in systems with $2$ queues and $3$ servers. It can be seen that the probabilities of play of the algorithms approximately converge to a stationary distribution. Note that a stationary distribution of play and the no-regret condition imply that this distribution is a Nash equilibrium. 
Figure \ref{fig:fig3b-dynamics-and-NE} shows a dynamic in a simpler system which has two pure Nash equilibria and one mixed equilibrium. As can be seen in the instance in the figure, which is a typical one, the dynamic in this system converges to a pure equilibrium. 
Figure \ref{fig:fig3c-TV-from-NE} depicts the total variation distance from the closest Nash equilibrium as a function of the horizon $T$ in $200$ simulations of the same system for each $T$. The strategies quickly converge, and so the distance of the historical average of play shrinks as well, roughly as $1/\sqrt{T}$.

%%%%%%%%%%%%%%%%%%%%%%%%%%%
\section{Conclusion} 
This work contributes to the growing literature on learning in games with carryover effects between rounds. Routing is an important setting where such effects arise and where routers use simple distributed learning algorithms. Carryover effects appear in many other strategic multi-agent settings, such as auctions with budget constraints, or in investment markets where current wealth affects future opportunities. A key observation from our results---potentially relevant more broadly to mechanism design with carryover effects---is that if the mechanism can smooth the payoffs learners experience over time, the resulting dynamics improve even with simple learners. In our model, buffers play this role by allowing a packet to be sent now and processed later. Exploring this idea in other domains is an interesting direction for future work.
In the networking setting studied here, our results leave two central challenges open. The first is to close the gap between the necessary lower bound of 2 and the sufficient factor of 3 in the ratio between the total arrival and service rates required for stability. The second is to extend the analysis to more complex network topologies and general buffer sizes.

\section*{Acknowledgments} 
We want to thank Bruce Hajek for great comments on an earlier version of our paper. \'Eva Tardos was supported in part by 
AFOSR grant FA9550-23-1-0410,  AFOSR grant FA9550-231-0068, and ONR MURI grant N000142412742.  Ariana Abel and Jer\'onimo Mart\'in Duque was supported by the BURE program of the Bowers College of Computing and Information Sciences at Cornell during the summer of 2024.

%% The file named.bst is a bibliography style file for BibTeX 0.99c
\bibliographystyle{named}
\bibliography{Queuing-with-a-Buffer-arxiv-V2}

\end{document}